\documentclass[notitlepage,a4,11pt]{article}

\usepackage{graphicx}

\usepackage{amsmath}%
\usepackage{amsfonts}%
\usepackage{amssymb}%
\usepackage{mathrsfs}
\usepackage{amsthm}

\usepackage{authblk}


\newcommand{\cp}{\times}

\newcommand{\cu}{\nabla\times} 
\newcommand{\JI}[1]{\bol{#1}\cdot\cu{\bol{#1}}}

\newcommand{\bol}{\boldsymbol}
\newcommand{\hb}[1]{\hat{\bol{#1}}}
\newcommand{\abs}[1]{\left\lvert{#1}\right\rvert}
\newcommand{\w}{\wedge}
\newcommand{\lr}[1]{\left({#1}\right)}

\newcommand{\mf}{\mathfrak}
\newcommand{\p}{\partial}

\newcommand{\ti}[1]{\textit{#1}}
\newcommand{\tb}[1]{\textbf{#1}}
\newcommand{\ov}[1]{\mkern 1.5mu\overline{\mkern-1.5mu#1\mkern-1.5mu}\mkern 1.5mu}

\newtheorem{theorem}{\textit{Theorem}}
\newtheorem{corollary}{\textit{Corollary}}
\newtheorem{proposition}{\textit{Proposition}}





\begin{document}
\title{Local Representation and Construction\protect\\of Beltrami Fields}
\author[1]{N. Sato} \author[1]{M. Yamada}
\affil[1]{Research Institute for Mathematical Sciences, \protect\\ Kyoto University, Kyoto 606-8502, Japan \protect\\ Email: sato@kurims.kyoto-u.ac.jp}
\date{\today}
\setcounter{Maxaffil}{0}
\renewcommand\Affilfont{\itshape\small}


\maketitle

\begin{abstract}
A Beltrami field is an eigenvector of the curl operator.
Beltrami fields describe steady flows in fluid dynamics and force free magnetic fields in plasma turbulence. 
By application of the Lie-Darboux theorem of differential geoemtry, 
we prove a local representation theorem for Beltrami fields.
We find that, locally, a Beltrami field has a standard form 
amenable to an Arnold-Beltrami-Childress flow with two of the parameters set to zero.
Furthermore, a Beltrami flow admits two local invariants, 
a coordinate representing the physical plane of the flow, and
an angular momentum-like quantity in the direction across the plane. 
As a consequence of the theorem, we derive a method to construct
Beltrami fields with given proportionality factor.
This method, based on the solution of the eikonal equation,
guarantees the existence of Beltrami fields for any 
orthogonal coordinate system such that at least two scale factors are equal.
We construct several solenoidal and non-solenoidal Beltrami fields with both homogeneous and inhomogeneous proportionality factors.
\end{abstract}





\section{Introduction}
Let $\bol{w}\in C^{\infty}\lr{{\Omega}}$ be a smooth vector field in a bounded domain $\Omega\subset\mathbb{R}^3$. Object of the present study is the following system of first order partial differential equations:
\begin{equation}
\bol{w}\cp\lr{\nabla\cp\bol{w}}=\bol{0}~~~~{\rm in}~~\Omega.\label{Beq}
\end{equation}
A solution $\bol{w}$ to \eqref{Beq} is called a Beltrami field.
Evidently, a Beltrami field $\bol{w}\neq\bol{0}$ in $\Omega$ 
is an eigenvector of the curl operator \cite{YRot}, i.e. it satisfies:
\begin{equation}
\cu{\bol{w}}=\alpha\,\bol{w}~~~~{\rm in}~~\Omega,\label{Beq2}
\end{equation}
where $\alpha\in C^{\infty}\lr{\Omega}$ is the proportionality factor (eigenvalue).

Beltrami fields arise as stationary solutions of the Euler equations in fluid dynamics \cite{Moffatt14,Moffatt85,Peralta} and as force free magnetic fields in magnetohydrodynamics \cite{Woltjer,Yos2002,Mah}. Indeed, the steady ideal Euler equations at constant density, 
\begin{equation}\label{IE}
\lr{\bol{w}\cdot\nabla}\bol{w}=-\nabla P,~~~~\nabla\cdot\bol{w}=0~~~~{\rm in}~~\Omega,
\end{equation} 
reduce to the equation for a solenoidal Beltrami field,
\begin{equation}\label{IE2}
\bol{w}\cp\lr{\cu{\bol{w}}}=\bol{0},~~~~\nabla\cdot\bol{w}=0~~~~{\rm in}~~\Omega,
\end{equation} 
whenever the pressure $P$ is given by $P=-w^{2}/2$. 
\eqref{IE2} is also the equation satisfied by a force free magnetic field.
Beltrami fields occur in topologically constrained systems as well, 
where they are operators acting on a Hamiltonian function to generate
particle dynamics\footnote{If $\bol{w}$ is a Beltrami operator, particle dynamics obeys the equation of motion $\dot{\bol{x}}=\bol{w}\cp\nabla H$, with $\bol{x}$ the particle position in $\mathbb{R}^{3}$ and $H$ a scalar function (the Hamiltonian).}\cite{Sato18}.  

In addition to physical applications, 
Beltrami fields play a key role in understanding 
topological properties of steady Euler flows \cite{Etnyre}.
In Refs. \cite{Enciso2,Enciso3,Enciso4} it is shown that
vortex tubes of arbitrarily complex topology can be
realized by means of Beltrami fields.
Despite the centrality of Beltrami fields in the study of fluid flows and plasma turbulence \cite{Dombre,Taylor}, there are
some aspects pertaining to their geometrical properties
that have not yet been clarified. 
This can be understood by comparison with complex lamellar vector fields, 
i.e. vector fields $\bol{w}\in C^{\infty}\lr{\Omega}$ with vanishing helicity density:
\begin{equation}
h=\JI{w}=0~~~~{\rm in}~~\Omega.\label{FIC}
\end{equation}
This condition is opposite to equation \eqref{Beq}, in the sense that
while \eqref{Beq} requires alignment between vector field and curl,
equation \eqref{FIC} requires orthogonality.
It is known (Frobenius theorem, see Ref. \cite{Frankel}) that
a vector field obeying \eqref{FIC} is \textit{integrable}.
More precisely, for any $\bol{x}\in\Omega$, there exists a neighborhood
$U\subset\Omega$ of $\bol{x}$ such that
\begin{equation}
\bol{w}=\lambda\,\nabla C~~~~{\rm in}~~U,
\end{equation} 
where $\lr{\lambda,C}\in C^{\infty}\lr{U}$ are smooth functions in $U$.
However, no local representation theorem is known for Beltrami fields.
Furthermore, exception made for standard Arnlod-Beltrami-Childress (ABC) flows \cite{Dombre,Zhao},
\begin{equation}
\begin{split}
\bol{w}=&\lr{A \sin{z}+C \cos{y}}\nabla x+\lr{B \sin{x}+A \cos{z}}\nabla y\\
        &+\lr{C\sin{y}+B\cos{x}}\nabla z,~~~~A,B,C\in\mathbb{R},\label{ABC}
\end{split}
\end{equation} 
very few explicit examples of Beltrami fields are known, most of them with constant proportionality factors, i.e. such that $\nabla\cp\bol{w}=\alpha\,\bol{w}$ with $\alpha\neq 0$ a real constant.
This problem is both intrinsic and technical: on one hand
the solution of \eqref{Beq} for general inhomogeneous proportionality factors
is locally obstructed by the inhomogeneity, leading to Beltrami fields with low-regularity \cite{Enciso,Kaiser,Kress}. 
On the other hand any Beltrami field with nonzero proportionality factor is inevitably nonintegrable (it cannot satisfy $h=0$). 
This automatically complicates the topology of the vector field, hiding the `natural' form of the solution to \eqref{Beq}.  
In terms of the Clebsch parametrization of a vector field \cite{YClebsch,Yos2017}, this means that the representation of a Beltrami field with nonzero proportionality factor will always need more than two Clebsch parameters. Hence, Beltrami fields with nonzero proportionality factors are fully three-dimensional objects,
and can be related to the notion of Reeb vector field in contact topology \cite{Etnyre2}.

This paper is organized as follows.
First, we prove the following local representation theorem for Beltrami fields by applying the Lie-Darboux theorem of differential geometry \cite{DeLeon_3,Arnold_4,Salmon}:
\begin{theorem}\label{LocRep}
Let $\bol{w}\in C^{\infty}\lr{\Omega}$ be a smooth vector field in a bounded domain $\Omega\subset\mathbb{R}^3$ with $h=\bol{w}\cdot\nabla\cp\bol{w}\neq\bol{0}$ in $\Omega$. Then $\bol{w}$ satisfies equation \eqref{Beq} if and only if for every $\bol{x}\in\Omega$ there exists a neighborhood $U\subset\Omega$ of $\bol{x}$ and a local coordinate system $\lr{\ell,\psi,\theta}\in C^{\infty}\lr{U}$ such that
\begin{subequations}\label{eq2}
\begin{align}
&\cos{\theta}~\sin{\theta}\lr{\abs{\nabla\psi}^2-\abs{\nabla\ell}^2}=\lr{\nabla\ell\cdot\nabla\psi}\lr{\cos^2{\theta}-\sin^2{\theta}},\\
&\sin{\theta}~\nabla\ell\cdot\nabla\theta+\cos{\theta}~\nabla\psi\cdot\nabla\theta=0,\label{2b}
\end{align}
\end{subequations}
and
\begin{equation}\label{eq3}
\bol{w}=\cos{\theta}~\nabla\psi+\sin{\theta}~\nabla\ell~~~~{\rm in}~~U.
\end{equation}
\end{theorem}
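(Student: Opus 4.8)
The plan is to pass freely between the vector field $\bol{w}$, for which the identity $\cu\bol{w}=\nabla\theta\cp\bol{\beta}$ will do the bookkeeping, and its associated $1$-form, on which the Lie-Darboux normal form is most transparent. Throughout I set $\bol{\beta}:=\cos\theta\,\nabla\ell-\sin\theta\,\nabla\psi$, so that $(\bol{w},\bol{\beta})$ is the rotation of $(\nabla\psi,\nabla\ell)$ by the angle $\theta$. The reformulation I rely on is the equivalence $\BN{w}=\bol{0}\Leftrightarrow\cu\bol{w}=\alpha\,\bol{w}$ for $\bol{w}\neq\bol{0}$, recorded in \eqref{Beq2}.

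First I would carry out the computation common to both directions. Assuming \eqref{eq3}, a direct differentiation gives $\cu\bol{w}=\nabla\theta\cp\bol{\beta}$, and then the vector identity $\bol{w}\cp(\nabla\theta\cp\bol{\beta})=(\bol{w}\cdot\bol{\beta})\,\nabla\theta-(\bol{w}\cdot\nabla\theta)\,\bol{\beta}$ shows that $\BN{w}=\bol{0}$ is equivalent, wherever $\nabla\theta$ and $\bol{\beta}$ are independent, to the two scalar relations $\bol{w}\cdot\bol{\beta}=0$ and $\bol{w}\cdot\nabla\theta=0$. Expanding these with $\bol{w}=\cos\theta\,\nabla\psi+\sin\theta\,\nabla\ell$ reproduces exactly the first equation of \eqref{eq2} and \eqref{2b}, respectively. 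Moreover $h=\bol{w}\cdot\cu\bol{w}=\nabla\psi\cdot(\nabla\theta\cp\nabla\ell)$ is the Jacobian of $(\psi,\theta,\ell)$, so $h\neq0$ is equivalent to $(\ell,\psi,\theta)$ being a genuine coordinate system. This settles the ``if'' direction and reduces the ``only if'' direction to exhibiting, for any given Beltrami field, local functions $(\ell,\psi,\theta)$ verifying the representation \eqref{eq3}.

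For the ``only if'' direction I would invoke the Lie-Darboux theorem. Since $h\neq0$, the dual $1$-form $w$ of $\bol{w}$ satisfies $w\wedge dw\neq0$, i.e. $w$ has maximal class, so there are local coordinates $(S,P,Q)$ with $w=dS+P\,dQ$; equivalently $\bol{w}=\nabla S+P\,\nabla Q$ and $\cu\bol{w}=\nabla P\cp\nabla Q$, whence $P$ and $Q$ are local first integrals of $\bol{w}$. I would then set $\theta:=P$ and look for $\bol{\beta}$ solving the coupled pair $\cu\bol{w}=\nabla\theta\cp\bol{\beta}$ and $\cu\bol{\beta}=-\nabla\theta\cp\bol{w}$, which are precisely the conditions making $\cos\theta\,\bol{w}-\sin\theta\,\bol{\beta}$ and $\sin\theta\,\bol{w}+\cos\theta\,\bol{\beta}$ curl free; one verifies that $\bol{\beta}=\nabla Q+(S+PQ)\,\nabla P$ works. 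Defining $\psi$ and $\ell$ by integrating the resulting gradients $\nabla\psi=\cos\theta\,\bol{w}-\sin\theta\,\bol{\beta}$ and $\nabla\ell=\sin\theta\,\bol{w}+\cos\theta\,\bol{\beta}$ recovers \eqref{eq3}, and then \eqref{eq2} follows from the common computation.

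The step I expect to be the main obstacle is exactly this last construction: simultaneously choosing the angle $\theta$ as a first integral and producing $\bol{\beta}$ so that the two rotated fields are gradients, i.e. solving the coupled curl system above. The point that unlocks it is that, in the Lie-Darboux coordinates, the constraint collapses to an algebraic ansatz for $\bol{\beta}$; structurally, the obstruction to integrating $\nabla\psi$ and $\nabla\ell$ is a two-form proportional to $\nabla\theta$, which therefore restricts to an exact form on each leaf $\{\theta=\mathrm{const}\}$, so the Poincar\'e lemma applied leaf by leaf delivers the potentials. Smoothness and the independence of $(\ell,\psi,\theta)$ are then automatic from $h\neq0$.
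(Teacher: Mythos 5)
Your proof is correct and follows essentially the same route as the paper: both rest on the (Lie--)Darboux normal form $\bol{w}=\nabla S+P\,\nabla Q$ obtained from $h\neq 0$, followed by a rotation through the angle coordinate to reach \eqref{eq3}, with \eqref{eq2} then read off from the Beltrami condition. The only substantive variations are cosmetic: you take $\theta=P$ and integrate the two rotated, curl-free fields via the Poincar\'e lemma, whereas the paper takes $\theta=Q$ and writes $(\ell,\psi)$ as an explicit rotation of $(S,P)$; and your derivation of \eqref{eq2} via $\bol{w}\cp(\nabla\theta\cp\bol{\beta})=(\bol{w}\cdot\bol{\beta})\nabla\theta-(\bol{w}\cdot\nabla\theta)\bol{\beta}$ replaces the paper's projection onto the cotangent basis --- both are valid since $\nabla\theta\cp\bol{\beta}=\cu\bol{w}\neq\bol{0}$ guarantees the needed independence.
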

A direct consequence of Theorem \ref{LocRep} is that
the flow generated by a Beltrami field admits two local invariants,
one representing the physical plane of the flow, the other conservation
of an angular momentum-like quantity in the direction across the plane.
The existence of two local invariants suggests a description of Beltrami flows in terms of Nambu brackets \cite{Nambu}.
We also find that, if the Beltrami field is solenoidal, 
the proportionality factor is a function of the local invariants, 
giving an explanation of the helical flow paradox \cite{Morgulis}.
In addition, Theorem \ref{LocRep} naturally leads to the following method for the construction of Beltrami fields with given proportionality factor $\alpha$, i.e. such that $\nabla\cp\bol{w}=\alpha\,\bol{w}$.
\begin{corollary}\label{Construction}
Let $\lr{\ell,\psi,\theta}\in C^{1}\lr{\Omega}$ be an orthogonal coordinate system in a bounded domain $\Omega\in\mathbb{R}^3$ such that, in $\Omega$,
\begin{subequations}
\begin{align}
\abs{\nabla\theta}&=\lvert\alpha\rvert,\label{Eik}\\
\abs{\nabla\ell}&=\abs{\nabla\psi},
\end{align}
\end{subequations} 
where $\alpha\in C\lr{\Omega}$. Then, the vector fields
\begin{subequations}
\begin{align}
&\bol{w}=\cos{\theta}~\nabla\psi+\sin{\theta}~\nabla\ell,\\
&\bol{w}^{\ast}=\sin{\theta}~\nabla\psi+\cos{\theta}~\nabla\ell,
\end{align}
\end{subequations}
are Beltrami fields in $\Omega$ with proportionality factors $\sigma\lvert\alpha\rvert$ and $-\sigma\lvert\alpha\rvert$ respectively, i.e. $\nabla\cp\bol{w}=\sigma\lvert\alpha\rvert\,\bol{w}$ and $\nabla\cp\bol{w}^{\ast}=-\sigma\lvert\alpha\rvert\,\bol{w}^{\ast}$, where $\sigma=h/\abs{h}$ is the sign of the helicity density $h=\JI{w}$. 
\end{corollary}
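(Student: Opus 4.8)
The plan is to verify both claims by direct differentiation, since the hypotheses are purely algebraic constraints on the gradients of the coordinate functions. The key identity is $\cu\lr{f\,\gr g}=\gr f\cp\gr g$, which follows from $\cu\gr g=\bol{0}$. Applying it to $\bol{w}=\cos\theta\,\gr\psi+\sin\theta\,\gr\ell$ and using $\gr\lr{\cos\theta}=-\sin\theta\,\gr\theta$, $\gr\lr{\sin\theta}=\cos\theta\,\gr\theta$, I obtain
\[
\cu\bol{w}=-\sin\theta\,\lr{\gr\theta\cp\gr\psi}+\cos\theta\,\lr{\gr\theta\cp\gr\ell}.
\]
Everything then hinges on re-expressing the two cross products in terms of $\gr\ell$ and $\gr\psi$, which is exactly where the orthogonality of the coordinate system and the two hypotheses enter.

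First I would introduce the orthonormal frame $\hb{e}_\ell=\gr\ell/\abs{\gr\ell}$, $\hb{e}_\psi=\gr\psi/\abs{\gr\psi}$, $\hb{e}_\theta=\gr\theta/\abs{\gr\theta}$, which is well defined precisely because $\lr{\ell,\psi,\theta}$ is orthogonal, and let $s=\pm1$ be its orientation, fixed by $\hb{e}_\ell\cp\hb{e}_\psi=s\,\hb{e}_\theta$. Then $\gr\theta\cp\gr\psi=-s\,\abs{\gr\theta}\abs{\gr\psi}\,\hb{e}_\ell$ and $\gr\theta\cp\gr\ell=s\,\abs{\gr\theta}\abs{\gr\ell}\,\hb{e}_\psi$. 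Here the hypothesis $\abs{\gr\ell}=\abs{\gr\psi}$ is indispensable: it cancels the residual scale factors and reduces these to $\gr\theta\cp\gr\psi=-s\abs{\gr\theta}\,\gr\ell$ and $\gr\theta\cp\gr\ell=s\abs{\gr\theta}\,\gr\psi$. Substituting the eikonal condition $\abs{\gr\theta}=\abs{\alpha}$ and collecting terms yields
\[
\cu\bol{w}=s\abs{\alpha}\,\lr{\cos\theta\,\gr\psi+\sin\theta\,\gr\ell}=s\abs{\alpha}\,\bol{w},
\]
so $\bol{w}$ is a Beltrami field with proportionality factor $s\abs{\alpha}$.

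It remains to identify $s$ with $\sigma$ and to treat $\bol{w}^\ast$. Dotting the last identity with $\bol{w}$ gives $h=\JI{w}=s\abs{\alpha}\,\abs{\bol{w}}^2$; since $\abs{\alpha}\abs{\bol{w}}^2\ge0$, the sign of $h$ coincides with $s$ wherever $h\neq0$, so $\sigma=s$ and $\cu\bol{w}=\sigma\abs{\alpha}\,\bol{w}$. The same computation applied to $\bol{w}^\ast=\sin\theta\,\gr\psi+\cos\theta\,\gr\ell$ produces $\cu\bol{w}^\ast=\cos\theta\,\lr{\gr\theta\cp\gr\psi}-\sin\theta\,\lr{\gr\theta\cp\gr\ell}=-s\abs{\alpha}\,\lr{\cos\theta\,\gr\ell+\sin\theta\,\gr\psi}=-s\abs{\alpha}\,\bol{w}^\ast$, giving the proportionality factor $-\sigma\abs{\alpha}$.

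The calculation is elementary once the frame is in place; the only point requiring care is the orientation sign $s$. Rather than assuming $\Omega$ is oriented, I would keep $s$ as the pointwise sign of the triple product and observe, via $h=s\abs{\alpha}\abs{\bol{w}}^2$, that it agrees pointwise with the helicity sign $\sigma=h/\abs{h}$; this makes the stated identification $\sigma=s$ automatic and is the conceptual crux of the argument.
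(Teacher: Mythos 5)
Your proof is correct, and it takes a somewhat different logical route from the paper's. The paper splits the argument in two: for $\bol{w}$ it first checks that orthogonality plus $\abs{\nabla\ell}=\abs{\nabla\psi}$ make system \eqref{eq2} hold trivially, invokes Theorem \ref{LocRep} to conclude that $\bol{w}$ is a Beltrami field, and only then computes the eigenvalue from $\hat{h}=h/w^{2}=\sigma\abs{\nabla\theta}$; for $\bol{w}^{\ast}$ it does a direct curl computation using the tangent-basis identities \eqref{tvectors}. You instead compute $\cu\bol{w}$ and $\cu\bol{w}^{\ast}$ directly for both fields via the identity $\cu\lr{f\,\nabla g}=\nabla f\cp\nabla g$ and the orthonormal frame $\lr{\hb{e}_{\ell},\hb{e}_{\psi},\hb{e}_{\theta}}$, which is algebraically equivalent to the paper's use of $\p_{\ell}=\nabla\psi\cp\nabla\theta/h$ and $\p_{\psi}=\nabla\theta\cp\nabla\ell/h$ but bypasses Theorem \ref{LocRep} entirely. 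What your route buys is self-containment: the corollary becomes an elementary standalone verification, independent of the Lie-Darboux machinery. What the paper's route buys is the conceptual point that the construction is a \emph{consequence} of the local representation theorem, which is the narrative the authors want. Your handling of the orientation sign $s$ and its identification with $\sigma=h/\abs{h}$ via $h=s\abs{\alpha}\,w^{2}$ is exactly the content of the paper's $\sigma\abs{\nabla\theta}$ step, and is done carefully; the only implicit ingredient you should perhaps state is that $h=\nabla\ell\cdot\nabla\psi\cp\nabla\theta\neq 0$ throughout $\Omega$ because $\lr{\ell,\psi,\theta}$ is a coordinate system, so $\sigma$ is well defined everywhere and $s$ is locally constant.
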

In the second part of the paper we apply corollary \ref{Construction} to
construct Beltrami fields with homogeneous and inhomogeneous proportionality factors.
Both solenoidal and non-solenoidal examples will be given.

\section{Preliminaries}
Consider equation \eqref{Beq2}.
If $\alpha$ is a nonzero real constant, $\bol{w}$ is called a strong (or linear) Beltrami field. 
The helicity density $h$ of the vector field $\bol{w}$ can be evalueated as
\begin{equation}
h=\JI{w}=\alpha~{w}^2.\label{h}
\end{equation}
Here $w=\abs{\bol{w}}$. We shall say that a Beltrami field is \ti{nontrivial}
whenever $h\neq 0$ in $\Omega$. If $h\neq 0$, we have $w\neq 0$ as well as $\cu{\bol{w}}\neq\bol{0}$ in $\Omega$. Hence, the proportionality factor $\alpha$ can be expressed in terms of the helicity density $h$ as
\begin{equation}
\alpha=w^{-2}\,h=\hat{h}.
\end{equation}
Here $\hat{h}=w^{-2}\,h=\hb{w}\cdot\nabla\cp\hb{w}$ represents the helicity density of the normalized vector field $\hb{w}=\bol{w}/w$.
In the following we will always be concerned with nontrivial Beltrami fields and adopt the notation
\begin{equation}
\cu{\bol{w}}=\hat{h}~\bol{w}.\label{BF}
\end{equation}
If $\bol{w}$ is solenoidal, i.e. $\nabla\cdot\bol{w}=0$, the helicity density $\hat{h}$ is an integral invariant of the flow generated by $\bol{w}$,
\begin{equation}
\nabla\hat{h}\cdot\bol{w}=0.
\end{equation} 
This can be verified by taking the divergence of equation \eqref{BF}.

The simplest example of nontrivial Beltrami field in $\mathbb{R}^3$ is
the vector field
\begin{equation}
\bol{w}=\sin{z}~\nabla x+\cos z~\nabla y,\label{B0}
\end{equation}
which satisfies $w=1$, $\cu{\bol{w}}=\bol{w}$, $\hat{h}=1$, and $\nabla\cdot\bol{w}=0$. Furthermore, \eqref{B0} can be written as
\begin{equation}
\bol{w}=\nabla z\cp\nabla\lr{x\cos{z}-y\sin{z}}.
\end{equation} 
Hence the flow generated by \eqref{B0} has two integral invariants: the variable $z$ which labels the planes where the flow lies, and the $z$-component of the angular momentum $\bol{L}=\bol{x}\cp\bol{w}$, $L_{z}=x\cos{z}-y\sin{z}$. Figure \ref{fig1} shows the behavior of the Beltrami field \eqref{B0} on the level sets $z=\rm constant$ and $L_{z}=\rm constant$. While the Beltrami field \eqref{B0} is very simple, 
it encloses all local properties of general Beltrami fields. 
This is the content of Theorem \ref{LocRep}.

\begin{figure}[h]
\hspace*{-0cm}\centering
\includegraphics[scale=0.3]{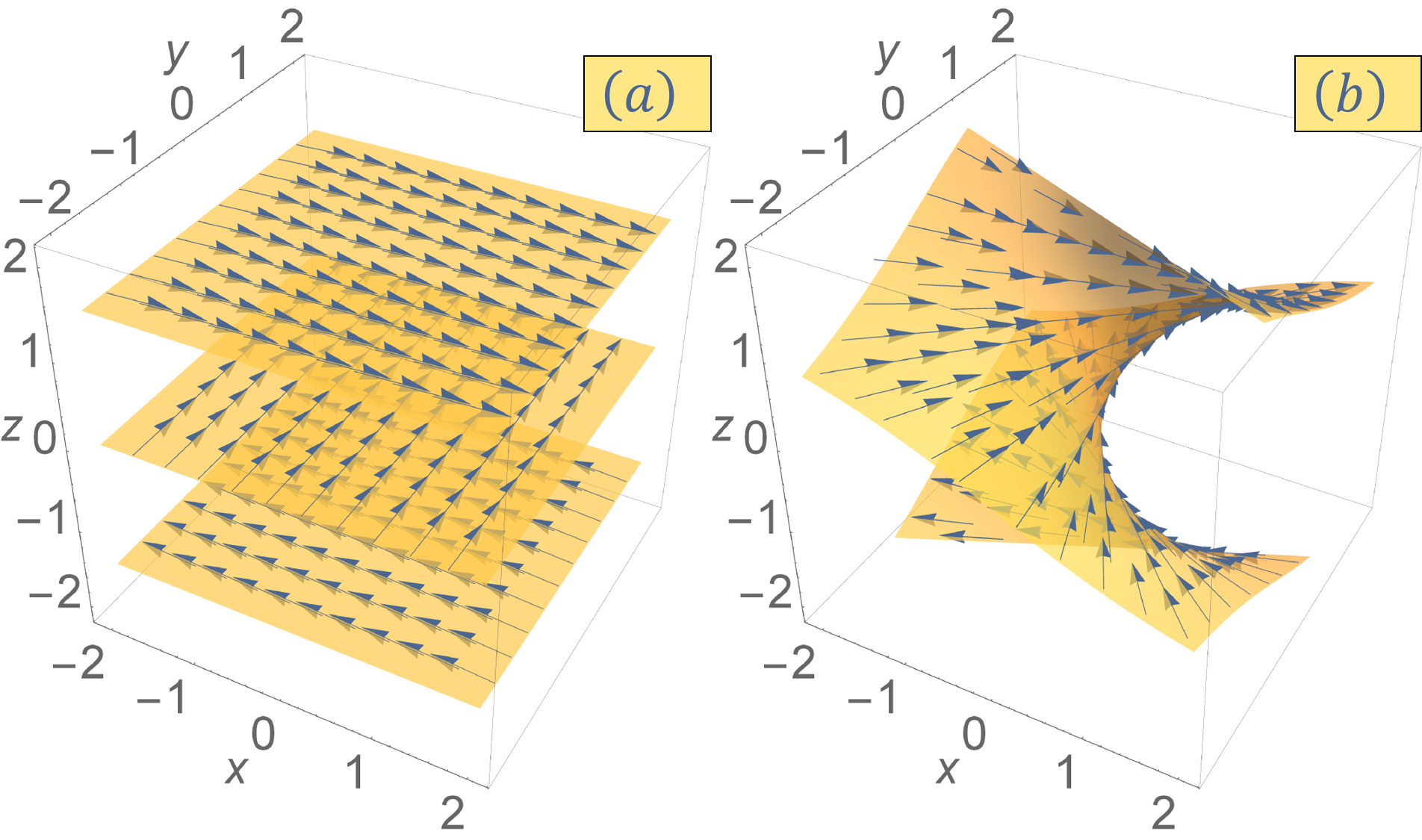}
\caption{\footnotesize (a): Plot of \eqref{B0} on the surfaces $z=-1.5,0,1.5$. (b): Plot of \eqref{B0} on the surface $L_{z}=0$.}
\label{fig1}
\end{figure}

It is useful to make some considerations on the relation between 
equation \eqref{Beq}, and the Frobenius integrability condition \eqref{FIC} for the vector field $\bol{w}$. We have already seen that, when \eqref{FIC} holds, there exists local smooth functions $\lambda$ and $C$ such that $\bol{w}=\lambda\,\nabla C$. Evidently, a Beltrami field can never satisfy \eqref{FIC} unless $\alpha\,w^2=0$. Hence, either $\alpha=0$ or $w=0$ in $\Omega$. Both imply $\cu{\bol{w}}=\bol{0}$ in $\Omega$. 
Thus, when $\cu{\bol{w}}\neq\bol{0}$, the Frobenius integrability condition \eqref{FIC} is `dual' to the Beltrami field condition \eqref{Beq}, in the sense that one requires orthogonality between vector field and curl, the other their alignment.

\section{Local representation of Beltrami fields}
In this section we prove Theorem \ref{LocRep}.

\begin{proof}
First we prove that \eqref{Beq} implies equations \eqref{eq2} and \eqref{eq3}.
Let ${\rm vol}=dx\w dy\w dz$ be the standard volume form of $\mathbb{R}^3$.
To the vector field $\bol{w}=\lr{w_{x},w_{y},w_{z}}$ we associate the 1-form
\begin{equation}
w^{1}=\ast\,i_{\bol{w}}{\rm vol}=w_{x}dx+w_{y}dy+w_{z}dz.
\end{equation}
Here $i$ is the contraction operator and $\ast$ the Hodge star operator
defined with respect to the Euclidean metric of $\mathbb{R}^{3}$.
Next we define the 2-form $\omega$,
\begin{equation}
\begin{split}
\omega&=dw^{1}\\
&=\lr{\frac{\p w_{z}}{\p y}-\frac{\p w_{y}}{\p z}}\ast dx+
\lr{\frac{\p w_{x}}{\p z}-\frac{\p w_{z}}{\p x}}\ast dy+
\lr{\frac{\p w_{y}}{\p x}-\frac{\p w_{x}}{\p y}}\ast dz\\
&=\lr{\cu{\bol{w}}}^{i} \ast dx^{i}.\label{dw1}
\end{split}
\end{equation}
On the other hand, any 2-form $\omega$ can be expressed in terms of an antisymmetric matrix $\omega_{ij}=-\omega_{ji}$ as $\omega=\sum_{i<j}\omega_{ij}\,dx^{i}\w dx^{j}$. In $\mathbb{R}^{3}$ this gives $\omega=\omega_{yz}\ast dx+\omega_{zx}\ast dy+\omega_{xy}\ast dz$. Hence $\omega_{ij}=\epsilon_{ijk}\lr{\cu{\bol{w}}}^{k}$.

The rank of the 2-form $\omega$ is determined by the rank of the matrix $\omega_{ij}$.
Since by hypothesis $h\neq 0$ in $\Omega$, we have $\cu{\bol{w}}\neq\bol{0}$ in $\Omega$. 
Thus the 2-form $\omega$ has rank 2 in $\Omega$ (in 3 dimensions any non-vanishing antisymmetric matrix has rank 2).
Furthermore, the 2-form $\omega$ is closed:
\begin{equation}
d\omega=ddw^{1}=0~~~~{\rm in}~~\Omega.
\end{equation} 
Therefore the hypothesis of the Lie-Darboux theorem \cite{DeLeon_3,Arnold_4,Salmon} are verified. Hence, for every point $\bol{x}\in\Omega$ there exists
a neighborhood $U\subset\Omega$ of $\bol{x}$ and functions $\lr{\lambda,C}\in C^{\infty}\lr{U}$ such that
\begin{equation}
\omega=d\lambda\w dC~~~~{\rm in}~~U.
\end{equation}
This implies $\omega=\epsilon_{ijk}\lambda_{i}C_{j}\ast dx^{k}$ and, from \eqref{dw1},
\begin{equation}
\cu{\bol{w}}=\nabla\lambda\cp\nabla C~~~~{\rm in}~~U.\label{curlw}
\end{equation}
The curl in equation \eqref{curlw} can be removed to give
\begin{equation}
\bol{w}=\nabla\mu+\lambda\,\nabla C~~~~{\rm in}~~U.\label{w}
\end{equation}
Since both $\bol{w}$ and $\lambda\,\nabla C$ are of class $C^{\infty}\lr{U}$, we have $\mu\in C^{\infty}\lr{U}$.

Next recall that, by hypothesis, $\bol{w}$ 
is a nontrivial Beltrami field such that $h\neq 0$ in $\Omega$. On the other hand, from equation \eqref{w}, we have
\begin{equation}
h=\nabla\mu\cdot\nabla\lambda\cp\nabla C\neq 0~~~~{\rm in}~~U.
\end{equation}
Hence, $\lr{\mu,\lambda,C}$ is a coordinate system in $U$ such that the Jacobian of the transformation $\lr{\mu,\lambda,C}\mapsto\lr{x,y,z}$ is given by $h$. We define a second coordinate change:
\begin{subequations}
\begin{align}
\ell&=\mu \sin{C}-\lambda \cos{C},\\
\psi&=\mu \cos{C}+\lambda \sin{C},\\
\theta&=C.
\end{align}
\end{subequations}
The coordinates $\lr{\ell,\psi,\theta}$ are well defined in $U$ since the Jacobian of the coordinate change $\lr{\ell,\psi,\theta}\mapsto\lr{\mu,\lambda,C}$ is $1$, i.e. $h=\nabla\mu\cdot\nabla\lambda\cp\nabla C=\nabla\ell\cdot\nabla\psi\cp\nabla\theta$. The inverse transformation is:
\begin{subequations}\label{InvTransf}
\begin{align}
\mu&=\psi \cos{\theta}+\ell \sin{\theta},\\
\lambda&=\psi \sin{\theta}-\ell \cos{\theta},\\
C&=\theta.
\end{align}
\end{subequations}
Substituting \eqref{InvTransf} into equation \eqref{w}, we have
\begin{equation}
\bol{w}=\cos{\theta}\,\nabla\psi+\sin{\theta}\,\nabla\ell~~~~{\rm in}~~U.\label{w2}
\end{equation} 
Notice that, since the Beltrami field equation \eqref{Beq} has not yet been used,
the local representation \eqref{w2} holds for any vector field $\bol{w}$ such that $h\neq 0$ in $\Omega$. Rewriting equation \eqref{Beq} in terms of the expression for $\bol{w}$ obtained above gives
\begin{equation}\label{BF4}
\sin{\theta}~\nabla\psi\cp\nabla\theta+\cos{\theta}~\nabla\theta\cp\nabla\ell=\frac{\nabla\ell\cdot\nabla\psi\cp\nabla\theta}{w^2}\lr{\cos{\theta}~\nabla\psi+\sin{\theta}~\nabla\ell}.
\end{equation}
Next, recall that tangent basis vectors $\lr{\p_{\ell},\p_{\psi},\p_{\theta}}$ can be expressed as
\begin{equation}\label{tvectors}
\p_{\ell}=\frac{\nabla\psi\cp\nabla\theta}{\nabla\ell\cdot\nabla\psi\cp\nabla\theta},~~
\p_{\psi}=\frac{\nabla\theta\cp\nabla\ell}{\nabla\ell\cdot\nabla\psi\cp\nabla\theta},~~
\p_{\theta}=\frac{\nabla\ell\cp\nabla\psi}{\nabla\ell\cdot\nabla\psi\cp\nabla\theta}.
\end{equation}
Substituting these expressions in \eqref{BF4} we obtain
\begin{equation}\label{BF5}
\sin{\theta}\lr{w^2\p_{\ell}-\nabla\ell}+\cos{\theta}\lr{w^{2}\p_{\psi}-\nabla\psi}=0.
\end{equation}
Projecting \eqref{BF5} on the cotangent basis $\lr{\nabla\ell,\nabla\psi,\nabla\theta}$ gives two linearly independent conditions in $U$:
\begin{subequations}\label{BF6}
\begin{align}
&\cos{\theta}~\sin{\theta}\lr{\abs{\nabla\psi}^2-\abs{\nabla\ell}^2}=\lr{\nabla\ell\cdot\nabla\psi}\lr{\cos^2{\theta}-\sin^{2}{\theta}},\\
&\sin{\theta}~\nabla\ell\cdot\nabla\theta+\cos{\theta}~\nabla\psi\cdot\nabla\theta=0.
\end{align}
\end{subequations}
Thus, we have shown that $\bol{w}$ is a nontrivial Beltrami field provided that we can find a local coordinate
system $\lr{\ell,\psi,\theta}$ satisfying equation \eqref{w2} 
and system \eqref{BF6}. This completes the proof of the first implication.
The proof of the converse statement is immediate, since system \eqref{BF6} guarantees that \eqref{w2} is a Beltrami field, i.e. that it satisfies equation \eqref{Beq}.

\end{proof}
 
Observe that a coordinate system $\lr{\ell,\psi,\theta}$ obeying 
system \eqref{BF6} is very close to an orthogonal coordinate system:
at all points where either $\sin{\theta}$ or $\cos{\theta}$ vanish,
the coordinates $\lr{\ell,\psi,\theta}$ must be orthogonal.
Theorem \ref{LocRep} has the following consequences.
\begin{corollary}\label{Invariants}
Let $\bol{w}\in C^{\infty}\lr{\Omega}$ be a smooth vector field in a bounded domain $\Omega\subset\mathbb{R}^3$ satisfying equation \eqref{Beq} with $h=\bol{w}\cdot\nabla\cp\bol{w}\neq\bol{0}$ in $\Omega$. Then the flow generated by $\bol{w}$ has two local integral invariants $\theta$ and $L_{\theta}=\ell\cos{\theta}-\psi\sin{\theta}$,
\begin{equation}
\bol{w}\cdot\nabla\theta=\bol{w}\cdot\nabla L_{\theta}=0~~~~{\rm in}~~U,
\end{equation}
where $\lr{\ell,\psi,\theta}\in C^{\infty}\lr{U}$ is the local coordinate system given in Theorem \ref{LocRep}.
\end{corollary}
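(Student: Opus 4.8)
The plan is to work entirely inside the local chart $\lr{\ell,\psi,\theta}$ furnished by Theorem \ref{LocRep}. There we may take the representation \eqref{eq3}, namely $\bol{w}=\cos{\theta}\,\nabla\psi+\sin{\theta}\,\nabla\ell$, together with the two structural constraints \eqref{eq2}, as given. Once this is fixed, both claimed conservation laws reduce to pure algebra in the pairwise dot products of the cotangent vectors $\nabla\ell$, $\nabla\psi$, $\nabla\theta$, with the two constraints in \eqref{eq2} doing all the work. No integration or analysis of the flow is needed: the statement $\bol{w}\cdot\nabla\theta=\bol{w}\cdot\nabla L_{\theta}=0$ is a pointwise identity in $U$.

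For the first invariant I would simply dot \eqref{eq3} with $\nabla\theta$, obtaining
\begin{equation*}
\bol{w}\cdot\nabla\theta=\cos{\theta}\,\nabla\psi\cdot\nabla\theta+\sin{\theta}\,\nabla\ell\cdot\nabla\theta,
\end{equation*}
which is exactly the left-hand side of the second constraint \eqref{2b} and therefore vanishes. The second invariant $L_{\theta}=\ell\cos{\theta}-\psi\sin{\theta}$ is the analogue, in the coordinates $\lr{\ell,\psi,\theta}$, of the angular-momentum component $L_{z}=x\cos{z}-y\sin{z}$ conserved by the model field \eqref{B0}; this analogy is what dictates its precise form. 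Here I would compute its gradient by the product and chain rules,
\begin{equation*}
\nabla L_{\theta}=\cos{\theta}\,\nabla\ell-\sin{\theta}\,\nabla\psi-\lr{\ell\sin{\theta}+\psi\cos{\theta}}\nabla\theta,
\end{equation*}
dot it with \eqref{eq3}, and then regroup the result into the part free of $\nabla\theta$ and the part proportional to $\nabla\theta$:
\begin{align*}
\bol{w}\cdot\nabla L_{\theta}
&=\lr{\cos^2{\theta}-\sin^2{\theta}}\lr{\nabla\ell\cdot\nabla\psi}-\cos{\theta}\,\sin{\theta}\lr{\abs{\nabla\psi}^2-\abs{\nabla\ell}^2}\\
&\quad-\lr{\ell\sin{\theta}+\psi\cos{\theta}}\lr{\cos{\theta}\,\nabla\psi\cdot\nabla\theta+\sin{\theta}\,\nabla\ell\cdot\nabla\theta}.
\end{align*}
The first line is precisely the difference of the two sides of the first identity in \eqref{eq2}, hence zero; the bracket in the second line is again the left-hand side of \eqref{2b}, hence zero. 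Therefore $\bol{w}\cdot\nabla L_{\theta}=0$, completing the argument.

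The only real obstacle is organizational rather than mathematical: one must expand $\bol{w}\cdot\nabla L_{\theta}$ and sort its six terms so that the two constraints of \eqref{eq2} surface cleanly, one governing the $\nabla\theta$-free terms and the other the $\nabla\theta$-terms. The point worth emphasizing in the write-up is that $L_{\theta}$ is not guessed ad hoc but is tailored so that its gradient pairs each of the two geometric constraints with exactly one group of terms; this is the structural reason a Beltrami flow carries two independent local invariants, matching the two-invariant picture of the elementary field \eqref{B0}.
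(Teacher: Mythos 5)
Your argument is correct, but it takes a genuinely different route from the paper's. The paper never touches system \eqref{eq2} in this proof: it computes the curl of the representation \eqref{eq3} directly,
\begin{equation*}
\cu{\bol{w}}=-\sin{\theta}~\nabla\theta\cp\nabla\psi+\cos{\theta}~\nabla\theta\cp\nabla\ell=\nabla\theta\cp\nabla\lr{\ell\cos{\theta}-\psi\sin{\theta}}=\nabla\theta\cp\nabla L_{\theta},
\end{equation*}
and then invokes the Beltrami equation in the form $\hat{h}\,\bol{w}=\cu{\bol{w}}$ with $\hat{h}=h/w^{2}\neq 0$; since the right-hand side is manifestly orthogonal to both $\nabla\theta$ and $\nabla L_{\theta}$, both invariants follow at once. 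You instead use the Beltrami condition only through its encoded form, the two constraints of \eqref{eq2}, and verify the orthogonality by expanding the dot products; I have checked the expansion of $\bol{w}\cdot\nabla L_{\theta}$ and the regrouping into the $\nabla\theta$-free part (killed by the first constraint) and the $\nabla\theta$-part (killed by \eqref{2b}) is exactly right, as is the observation that $\bol{w}\cdot\nabla\theta=0$ is literally \eqref{2b}. The trade-off: your version is more elementary and makes visible precisely which geometric constraint is responsible for which invariant, whereas the paper's version is shorter and, more importantly, produces the Clebsch-type identity $\cu{\bol{w}}=\nabla\theta\cp\nabla L_{\theta}$ as a by-product, which the paper reuses immediately afterwards for the Nambu-bracket representation of the flow and for the conclusion that $\hat{h}=\hat{h}\lr{\theta,L_{\theta}}$ when $\bol{w}$ is solenoidal. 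If you adopt your route in a write-up, you would need to derive that identity separately to recover those consequences.
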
  

\begin{proof}
The hypothesis of Theorem \ref{LocRep} are verified. Then $\bol{w}$
admits the local representation of equation \eqref{w2}. From \eqref{Beq} it follows that
\begin{equation}
\hat{h}\,\bol{w}=\cu{\bol{w}}=\nabla\theta\cp\nabla\lr{\ell\cos{\theta}-\psi\sin{\theta}}~~~~{\rm in}~~U.\label{hw}
\end{equation}
where $\hat{h}=h/w^2$. Hence,
\begin{equation}
\bol{w}\cdot\nabla\theta=\bol{w}\cdot\nabla L_{\theta}=0~~~~{\rm in}~~U.
\end{equation}
\end{proof}
\noindent Observe that the level sets of the invariant $\theta$ describe the local planes where the flow generated by $\bol{w}$ lies, while the invariant $L_{\theta}=\nabla\theta\cdot\tilde{\bol{L}}$ reflects conservation of the angular momentum-like quantity $\tilde{\bol{L}}=\tilde{\bol{r}}\cp\bol{w}$, $\tilde{\bol{r}}=h^{-1}\lr{\ell\,\nabla\ell+\psi\,\nabla\psi+\theta\,\nabla\theta}$, in the direction $\nabla\theta$ across such planes. 
Note that $\tilde{\bol{L}}$ reduces to the standard angular momentum if $\lr{\ell,\psi,\theta}$ is a Cartesian coordinate system.
Furthermore, since a nontrivial Beltrami flow is locally endowed with two integral invariants, the evolution
of an observable $f\in C^{\infty}\lr{U}$ with respect to such flow can be expressed in terms of a Nambu bracket \cite{Nambu},
\begin{equation}
\dot{f}=\left\{f,\theta,L_{\theta}\right\}=\hat{h}^{-1}\nabla f\cdot\nabla\theta\cp\nabla L_{\theta}~~~~{\rm in}~~U.
\end{equation}
If $\nabla\cdot\bol{w}=0$, one has $\nabla\hat{h}\cdot\bol{w}=0$.
Therefore, from \eqref{hw} we see that the proportionality factor $\hat{h}$ is a local function of the invariants $\theta$ and $L_{\theta}=\ell\cos{\theta}-\psi\sin{\theta}$ alone,
\begin{equation}
\hat{h}=\hat{h}\lr{\theta,L_{\theta}}~~~~{\rm in}~~U.
\end{equation}

\section{Construction of Beltrami fields}
As a consequence of Theorem \ref{LocRep} it is possible to establish a 
method (corollary \ref{Construction}) to construct Beltrami fields with given proportionality coefficient $\hat{h}$. In this section we prove corollary \ref{Construction} and discuss the construction procedure in detail.

\begin{proof}
First observe that by hypothesis the variables $\lr{\ell,\psi,\theta}$ define an orthogonal coordinate system in $\Omega$. Therefore $\abs{\nabla\theta}=\abs{\alpha}\neq 0$ in $\Omega$, implying $\alpha\neq 0$. It follows that the sign of $\alpha$ does not change, and we can write either $\alpha=\abs{\alpha}$ or $\alpha=-\abs{\alpha}$ in $\Omega$. Next, note that the orthogonality condition demands that:
\begin{equation}
\nabla\ell\cdot\nabla\psi=\nabla\ell\cdot\nabla\theta=\nabla\psi\cdot\nabla\theta=0~~~~{\rm in}~~U.\label{ortho}
\end{equation}
Combining \eqref{ortho} with the assumption $\abs{\nabla\ell}=\abs{\nabla\psi}$, we see that system \eqref{eq2} is satisfied. Then, from Theorem \ref{LocRep}, the vector field $\bol{w}=\cos{\theta}~\nabla\psi+\sin{\theta}~\nabla\ell$ is a Beltrami field. Let us evaluate the proportionality factor $\hat{h}$. From the definition,
\begin{equation}
\hat{h}=\frac{\JI{w}}{w^2}=\frac{\nabla\ell\cdot\nabla\psi\cp\nabla\theta}{\abs{\cos{\theta}~\nabla\psi+\sin{\theta}~\nabla\ell}^2}=\sigma \abs{\nabla\theta}=\sigma\abs{\alpha}~~~~{\rm in}~~U,
\end{equation}  
where we used the orthogonality of the local coordinate system and the assumption $\abs{\nabla\ell}=\abs{\nabla\psi}$. Here $\sigma=h/\abs{h}$ is the sign of the helicity density $h=\JI{w}=\nabla\ell\cdot\nabla\psi\cp\nabla\theta$.

Consider now the vector field $\bol{w}^{\ast}=\sin{\theta}~\nabla\psi+\cos{\theta}~\nabla\ell$. We have
\begin{equation}
\begin{split}
\cu{\bol{w}^{\ast}}&=-h\lr{\cos{\theta}~\frac{\nabla\psi\cp\nabla\theta}{h}+\sin{\theta}~\frac{\nabla\theta\cp\nabla\ell}{h}}
\\
&=-h\lr{\cos{\theta}~\frac{\nabla\ell}{\abs{\nabla\ell}}\abs{\p_{\ell}}+\sin{\theta}~\frac{\nabla\psi}{\abs{\nabla\psi}}\abs{\p_{\psi}}}=-\hat{h}\,\bol{w}^{\ast}~~~~{\rm in}~~U.
\end{split}
\end{equation}
Here we used equation \eqref{tvectors}, the orthogonality condition, the assumption $\abs{\nabla\psi}=\abs{\nabla\ell}$, and the fact that $\hat{h}=h/w^2=h/\abs{\nabla\psi}^2$. This concludes the proof.
\end{proof}

In some cases it is useful to express the vector field $\bol{w}$ in
terms of the tangent basis. From equation \eqref{BF5}, 
\begin{equation}
\bol{w}=\cos{\theta}~\nabla\psi+\sin{\theta}~\nabla\ell=w^{2}\lr{\cos{\theta}~\p_\psi+\sin{\theta}~\p_\ell}.
\end{equation}
If the coordinate system $\lr{\ell,\psi,\theta}$ is orthogonal 
with $\abs{\nabla\ell}=\abs{\nabla\psi}$ we have
\begin{equation}
\bol{w}=\cos{\theta}~\nabla\psi+\sin{\theta}~\nabla\ell=
\abs{\nabla\psi}^2\lr{\cos{\theta}~\p_{\psi}+\sin{\theta}~\p_{\ell}}.\label{wt}
\end{equation}
By using equation \eqref{wt}, we can evaluate the divergence $\nabla\cdot\bol{w}$ in the coordinate system $\lr{\ell,\psi,\theta}$. First recall that
\begin{equation}
dx\w dy\w dz=h^{-1}\,d\ell\w d\psi \w d\theta.
\end{equation}
Hence,
\begin{equation}
\lr{\nabla\cdot\bol{w}}\,dx\w dy\,\w dz\,=\mf{L}_{\bol{w}}\lr{h^{-1}\,d\ell\w d\psi\w d\theta}.
\end{equation}
It follows that
\begin{equation}
\nabla\cdot\bol{w}=\frac{\p w_{x}}{\p x}+\frac{\p w_{y}}{\p y}+\frac{\p w_{z}}{\p z}=h\,\left[\frac{\p}{\p\ell}\lr{w^2\frac{\sin{\theta}}{h}}+\frac{\p}{\p\psi}\lr{w^2\frac{\cos{\theta}}{h}}\right].
\end{equation}
If the coordinate system $\lr{\ell,\psi,\theta}$ is orthogonal 
with $\abs{\nabla\ell}=\abs{\nabla\psi}$ this gives
\begin{equation}
\nabla\cdot\bol{w}=\frac{\p w_{x}}{\p x}+\frac{\p w_{y}}{\p y}+\frac{\p w_{z}}{\p z}=\abs{\nabla\psi}^2\abs{\nabla\theta}\,\left[\frac{\p}{\p\ell}\lr{\frac{\sin{\theta}}{\abs{\nabla\theta}}}+\frac{\p}{\p\psi}\lr{\frac{\cos{\theta}}{\abs{\nabla\theta}}}\right].
\end{equation}

Corollary \ref{Construction} can be slightly generalized as follows:
\begin{proposition}
Let $\lr{\ell,\psi,\theta}\in C^{1}\lr{\Omega}$ be an orthogonal coordinate system in a bounded domain $\Omega\in\mathbb{R}^3$ such that, in $\Omega$,
\begin{subequations}
\begin{align}
\abs{\nabla\theta}&=\lvert\alpha\rvert,\label{Eik}\\
\abs{\nabla\ell}&=\abs{\nabla\psi},
\end{align}
\end{subequations} 
where $\alpha\in C\lr{\Omega}$. 
Let $f\in C\lr{\ov{\Omega}}$ be a function of the variable $\theta$, $f=f\lr{\theta}$. Then, the vector fields
\begin{subequations}
\begin{align}
&\bol{w}=\cos\lr{\int{f\,d\theta}}~\nabla\psi+\sin\lr{\int{f\,d\theta}}~\nabla\ell,\\
&\bol{w}^{\ast}=\sin\lr{\int{f\,d\theta}}~\nabla\psi+\cos\lr{\int{f\,d\theta}}~\nabla\ell.
\end{align}
\end{subequations}
are Beltrami fields in $\Omega$ with proportionality factors $\sigma\lvert\alpha\rvert\,f$ and $-\sigma\lvert\alpha\rvert\,f$ respectively, where $\sigma$ is the sign of the Jacobian $\nabla\ell\cdot\nabla\psi\cp\nabla\theta$. 
\end{proposition}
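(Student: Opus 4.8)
The plan is to absorb the extra factor $f$ into a redefinition of the angular coordinate, reducing the statement to a direct computation of the curl as in the proof of Corollary \ref{Construction}. Set $\Theta:=\int f\,d\theta$, an antiderivative of $f$ with respect to $\theta$. Since $f=f\lr{\theta}$ depends on $\theta$ alone, $\Theta$ is a function of $\theta$ alone, and the chain rule gives $\nabla\Theta=f\,\nabla\theta$. With this notation the two candidate fields read $\bol{w}=\cos\Theta\,\nabla\psi+\sin\Theta\,\nabla\ell$ and $\bol{w}^{\ast}=\sin\Theta\,\nabla\psi+\cos\Theta\,\nabla\ell$, i.e. precisely the fields of Corollary \ref{Construction} with the angular coordinate $\theta$ replaced by $\Theta$.

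Before computing, I would record that the orthogonality structure survives this replacement. Because $\nabla\Theta=f\,\nabla\theta$ is a scalar multiple of $\nabla\theta$, the relations \eqref{ortho} yield $\nabla\ell\cdot\nabla\Theta=\nabla\psi\cdot\nabla\Theta=0$, while $\nabla\ell\cdot\nabla\psi=0$ and $\abs{\nabla\ell}=\abs{\nabla\psi}$ are unchanged; hence conditions \eqref{eq2} hold with $\theta$ replaced by $\Theta$. This matches the hypotheses behind Theorem \ref{LocRep} and Corollary \ref{Construction} and signals that the fields are of Beltrami type, but I prefer to establish the result by computing the curl directly, since that argument simultaneously pins down the eigenvalue and stays valid at the zeros of $f$.

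The computation mirrors Corollary \ref{Construction}. From $\nabla\lr{\cos\Theta}=-f\sin\Theta\,\nabla\theta$ and $\nabla\lr{\sin\Theta}=f\cos\Theta\,\nabla\theta$, together with the tangent-basis formulas \eqref{tvectors}, one obtains $\cu{\bol{w}}=f\lr{\nabla\ell\cdot\nabla\psi\cp\nabla\theta}\lr{\sin\Theta\,\p_{\ell}+\cos\Theta\,\p_{\psi}}$. Using orthogonality and $\abs{\nabla\ell}=\abs{\nabla\psi}$ to write $\p_{\ell}=\nabla\ell/\abs{\nabla\ell}^2$ and $\p_{\psi}=\nabla\psi/\abs{\nabla\psi}^2$, and $\abs{\nabla\ell\cdot\nabla\psi\cp\nabla\theta}=\abs{\nabla\psi}^2\abs{\nabla\theta}=\abs{\nabla\psi}^2\abs{\alpha}$, this collapses to $\cu{\bol{w}}=\sigma\abs{\alpha}f\,\bol{w}$, where $\sigma$ is the sign of the Jacobian $\nabla\ell\cdot\nabla\psi\cp\nabla\theta$. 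Since $\cu{\bol{w}}$ is a scalar multiple of $\bol{w}$, equation \eqref{Beq} holds and $\bol{w}$ is a Beltrami field with proportionality factor $\sigma\abs{\alpha}f$. The field $\bol{w}^{\ast}$ is handled identically: interchanging $\sin\Theta$ and $\cos\Theta$ flips the overall sign, giving $\cu{\bol{w}^{\ast}}=-\sigma\abs{\alpha}f\,\bol{w}^{\ast}$.

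The main, though mild, obstacle is the bookkeeping of the sign where $f$ vanishes or changes sign. A literal reduction to Corollary \ref{Construction} is available only on $\left\{f\neq 0\right\}$, because $\lr{\ell,\psi,\Theta}$ fails to be a coordinate system at zeros of $f$: its Jacobian $\nabla\ell\cdot\nabla\psi\cp\nabla\Theta=f\lr{\nabla\ell\cdot\nabla\psi\cp\nabla\theta}$ degenerates there. On $\left\{f\neq 0\right\}$ that reduction applies with effective eigenvalue modulus $\abs{f}\abs{\alpha}$ and Jacobian sign $\sgn\lr{f}\,\sigma$, reproducing the factor $\sgn\lr{f}\,\sigma\,\abs{f}\,\abs{\alpha}=\sigma\abs{\alpha}f$. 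To cover the zero set as well, I rely on the direct curl computation above, which holds verbatim even where $f=0$: there $\cu{\bol{w}}=\bol{0}=\sigma\abs{\alpha}f\,\bol{w}$, so the identity extends continuously across the zeros of $f$. Finally I would stress that $\sigma$ is tied to the fixed Jacobian $\nabla\ell\cdot\nabla\psi\cp\nabla\theta$, so that it is $f$ itself, and not $\abs{f}$, that transports the sign information carried by $f$ into the proportionality factor.
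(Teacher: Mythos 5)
Your proof is correct and follows essentially the route the paper intends: the paper states this proposition without a separate proof, presenting it as a slight generalization of Corollary \ref{Construction}, and your substitution $\Theta=\int f\,d\theta$ with $\nabla\Theta=f\,\nabla\theta$ followed by the direct curl computation via \eqref{tvectors} is exactly that argument. Your extra care at the zero set of $f$, where $\lr{\ell,\psi,\Theta}$ degenerates as a coordinate system, is a worthwhile refinement rather than a deviation.
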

A related result is the following.
\begin{proposition}
Let $\lr{\alpha,\beta,\gamma}\in C^{1}\lr{\Omega}$ be an orthogonal coordinate system in a bounded domain $\Omega\in\mathbb{R}^3$ with $\abs{\nabla\alpha}=\abs{\nabla\beta}$.
Let $f\in C^{1}\lr{\Omega}$ be a function of the variable $\gamma$, $f=f\lr{\gamma}$. Then the vector field
\begin{equation}
\bol{w}=\frac{1}{\sqrt{1+f^{2}}}~\nabla\beta+\frac{f}{\sqrt{1+f^2}}~\nabla\alpha,
\end{equation} 
is a Beltrami field with proportionality factor $\sigma\abs{\nabla\gamma} f_{\gamma}/\sqrt{1+f^2}$, where $\sigma$ is the sign of the Jacobian $\nabla\alpha\cdot\nabla\beta\cp\nabla\gamma$.
\end{proposition}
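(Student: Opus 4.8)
The plan is to reduce the statement to the local normal form of Theorem~\ref{LocRep} by absorbing the normalization into an effective angle, and then to confirm the Beltrami property and read off the eigenvalue by a direct curl computation in the orthogonal frame $\lr{\alpha,\beta,\gamma}$. First I would note that the coefficients $1/\sqrt{1+f^2}$ and $f/\sqrt{1+f^2}$ have unit sum of squares, so there is a smooth angle $\Theta$ with $\cos\Theta=1/\sqrt{1+f^2}$ and $\sin\Theta=f/\sqrt{1+f^2}$ (explicitly $\Theta=\arctan f$). Since $f=f\lr{\gamma}$, this angle depends on $\gamma$ alone, and the field becomes $\bol{w}=\cos\Theta\,\nabla\beta+\sin\Theta\,\nabla\alpha$, which is exactly the representation \eqref{eq3} with $\lr{\ell,\psi}$ relabelled as $\lr{\alpha,\beta}$. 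Because $\nabla\Theta$ is parallel to $\nabla\gamma$, the triple $\lr{\alpha,\beta,\Theta}$ is orthogonal whenever $\lr{\alpha,\beta,\gamma}$ is, so the hypotheses match those already exploited in Corollary~\ref{Construction}.

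Next I would verify that $\bol{w}$ is Beltrami. Using $\cu{\nabla\alpha}=\cu{\nabla\beta}=\bol{0}$, the curl reduces to $\cu{\bol{w}}=\nabla\Theta\cp\lr{\cos\Theta\,\nabla\alpha-\sin\Theta\,\nabla\beta}$, and since $\nabla\Theta$ is proportional to $\nabla\gamma$ this is a combination of $\nabla\gamma\cp\nabla\alpha$ and $\nabla\gamma\cp\nabla\beta$. I would rewrite these two cross products in the tangent basis through the identities \eqref{tvectors} (in the relabelled coordinates), then invoke the orthogonality relations together with $\abs{\nabla\alpha}=\abs{\nabla\beta}$. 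This is precisely the step that makes the off-diagonal condition \eqref{2b} hold automatically for an orthogonal frame, and it collapses $\cu{\bol{w}}$ back into a scalar multiple of $\bol{w}$, certifying the Beltrami property via Theorem~\ref{LocRep}.

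Finally I would read off the proportionality factor. Substituting the Jacobian $\nabla\alpha\cdot\nabla\beta\cp\nabla\gamma=\sigma\,\abs{\nabla\alpha}\,\abs{\nabla\beta}\,\abs{\nabla\gamma}$ and simplifying with $\abs{\nabla\alpha}=\abs{\nabla\beta}$, the scalar coefficient of $\bol{w}$ should collapse to the claimed value $\sigma\,\abs{\nabla\gamma}\,f_{\gamma}/\sqrt{1+f^2}$. The main obstacle I anticipate is the careful bookkeeping of the normalization factors $\lr{1+f^2}^{\pm 1/2}$: differentiating $\cos\Theta$ and $\sin\Theta$ produces powers of $\lr{1+f^2}$ that must be balanced against the $\sqrt{1+f^2}$ implicit in $\bol{w}$ and in $w^2=\abs{\bol{w}}^2=\abs{\nabla\alpha}^2$, and keeping track of this exponent is the delicate point that fixes the precise power appearing in the denominator of the eigenvalue.
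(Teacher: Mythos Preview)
Your approach is exactly the paper's: its entire proof is the single line ``the proof follows by performing the change of variable $\tau=\arctan f$,'' which is precisely your $\Theta$, after which the statement reduces to Corollary~\ref{Construction} and the preceding proposition. Your caution about tracking the exponent of $1+f^{2}$ is well placed---carrying the computation through actually yields the proportionality factor $\sigma\,\abs{\nabla\gamma}\,f_{\gamma}/(1+f^{2})$, so the $\sqrt{1+f^{2}}$ printed in the denominator is a misprint.
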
 
\noindent The proof of this statement follows by performing the change of variable $\tau=\arctan{f}$.

Let us explain how corollary \ref{Construction} is applied. 
Suppose that we wish to construct a Beltrami field $\bol{w}$ with given
proportionality factor $\alpha\neq 0$, 
i.e. such that $\cu{\bol{w}}=\alpha\,\bol{w}$.
This can be accomplished by solving the following system of first order partial differential equations for the variables $\lr{\ell,\psi,\theta}$ in the domain $\Omega$:
\begin{subequations}\label{BFC}
\begin{align}
&\abs{\nabla\theta}=\abs{\alpha},\label{Eik}\\
&\abs{\nabla\ell}=\abs{\nabla\psi},\\
&\nabla\ell\cdot\nabla\psi=0,\\
&\nabla\ell\cdot\nabla\theta=0,\\
&\nabla\psi\cdot\nabla\theta=0.
\end{align}
\end{subequations}
If a solution $\lr{\ell,\psi,\theta}\in C^{1}\lr{\Omega}$ is found, we set $h=\nabla\ell\cdot\nabla\psi\cp\nabla\theta$. 
Then the desired Beltrami field is $\bol{w}=\cos{\theta}~\nabla\psi+\sin{\theta}~\nabla\ell$ for $\alpha >0$ and $h>0$ or $\alpha <0$ and $h<0$, and
$\bol{w}^{\ast}=\sin{\theta}~\nabla\psi+\cos{\theta}~\nabla\ell$ for $\alpha<0$ and $h>0$ or $\alpha >0$ and $h<0$. A Beltrami field constructed in this way is
endowed with the invariants $\theta$ and $L_{\theta}=\ell\cos{\theta}-\psi\sin{\theta}$ of corollary \ref{Invariants} in the whole $\Omega$.

Equation \eqref{Eik} is the eikonal equation and can be solved
by application of the method of characteristics \cite{Evans2}.
However, notice that, even if the variable $\theta$ is known, 
the existence of the orthogonal coordinate system $\lr{\ell,\psi,\theta}$ is not guaranteed due to the coupling among the remaining equations in system \eqref{BFC}. 

If a solenoidal Beltrami field is needed, system \eqref{BFC} must be supplied with an additional equation arising from the condition $\nabla\cdot\bol{w}=0$. Specifically, the obtained vector fields $\bol{w}$ and $\bol{w}^{\ast}$ will be solenoidal provided that
\begin{subequations}\label{sol}
\begin{align}
&\nabla\cdot\bol{w}=\cos{\theta}~\Delta\psi+\sin{\theta}~\Delta\ell=0~~~~{\rm in}~~\Omega,\\
&\nabla\cdot\bol{w}^{\ast}=\sin{\theta}~\Delta\psi+\cos{\theta}~\Delta\ell=0~~~~{\rm in}~~\Omega.
\end{align}
\end{subequations}   
Note that these equations are satisfied simultaneously if the coordinates $\ell$ and $\psi$ are harmonic, i.e. if $\Delta\ell=\Delta\psi=0$ in $\Omega$.

System \eqref{BFC} combined with one of the equations in \eqref{sol}  
provides a method to produce solutions of system \eqref{IE2}, which describes steady incompressible fluid flow. This approach can be generalized to the compressible case. To see this we consider the steady compressible ideal Euler equations
\begin{equation}\label{IE4}
\lr{\bol{w}\cdot\nabla}\bol{w}=-\rho^{-1}\nabla P,~~~~\nabla\cdot\lr{\rho\,\bol{w}}=0~~~~{\rm in}~~\Omega.
\end{equation}
Here $\rho>0$ represents fluid density. 
In the following we assume that both $\rho$ and $P$ are smooth in $\Omega$. Assuming a barotropic pressure $P=P\lr{\rho}$, we can write $\nabla P/\rho=\nabla\phi$ for some appropriate function $\phi=\phi\lr{\rho}$. Then system \eqref{IE4} reduces to
\begin{equation}\label{IE5}
\bol{w}\cp\lr{\nabla\cp\bol{w}}=\nabla\lr{\phi+\frac{w^2}{2}},~~~~\nabla\cdot\lr{\rho\,\bol{w}}=0~~~~{\rm in}~~\Omega.
\end{equation}
We look for a solution of \eqref{IE5} in terms of a nontrivial Beltrami field. 
Then $\phi+w^2/2=c$, with $c$ a real constant. If the function $\phi$ is invertible, we can write $\rho=\phi^{-1}\lr{c-w^2/2}$, and system \eqref{IE5} reduces to
\begin{equation}\label{IE6}
\bol{w}\cp\lr{\nabla\cp\bol{w}}=\bol{0},~~~~\nabla\log\left[{\phi^{-1}\lr{c-\frac{w^2}{2}}}\right]\cdot\bol{w}+\nabla\cdot\bol{w}=0~~~~{\rm in}~~\Omega.
\end{equation}
For an ideal gas $P=k\,\rho$, with $k$ a positive real constant. This gives $\phi=k\log{\rho}$ and system \eqref{IE6} becomes
\begin{equation}\label{IE7}
\bol{w}\cp\lr{\nabla\cp\bol{w}}=\bol{0},~~~~\nabla\lr{w^2}\cdot\bol{w}-2k~\nabla\cdot\bol{w}=0~~~~{\rm in}~~\Omega.
\end{equation}

Equation \eqref{IE5} can be cast in an equivalent form by observing that if $\bol{w}$ is a nontrivial Beltrami field $\rho\,\bol{w}=\rho\,\hat{h}^{-1}\nabla\cp\bol{w}$. Then:
\begin{equation}\label{IE8}
\bol{w}\cp\lr{\nabla\cp\bol{w}}=\bol{0},~~~~\nabla\left\{\left[{\phi^{-1}\lr{c-\frac{w^2}{2}}}\right]\hat{h}^{-1}\right\}\cdot\nabla\cp\bol{w}=0~~~~{\rm in}~~\Omega.
\end{equation}
In local coordinates $\lr{\ell,\psi,\theta}$ the continuity equation is thus equivalent to
\begin{equation}\label{IE9}
\begin{split}
\nabla\left\{\left[{\phi^{-1}\lr{c-\frac{\lr{\cos{\theta}~\nabla\psi+\sin{\theta}~\nabla\ell}^2}{2}}}\right]
\frac{\lr{\cos{\theta}~\nabla\psi+\sin{\theta}~\nabla\ell}^2}{\nabla\ell\cdot\nabla\psi\cp\nabla\theta}\right\}\\\cdot\nabla\theta\cp\nabla \lr{\ell\cos{\theta}-\psi\sin{\theta}}=0~~~~{\rm in}~~U.
\end{split}
\end{equation} 
This shows that the existence of a nontrivial Beltrami field solution to the compressible Euler equations with barotropic pressure \eqref{IE6} is locally equivalent to the existence of 
a coordinate system $\lr{\ell,\psi,\theta}$ such that equations \eqref{eq2}, \eqref{eq3}, and \eqref{IE9} hold.

Equation \eqref{IE9} can be simplified if the coordinate system $\lr{\ell,\psi,\theta}$ is orthogonal (and thus $\abs{\nabla\psi}=\abs{\nabla\ell}$ from system \eqref{eq2}) and $P=k\,\rho$ is the equation of state of an ideal gas. In such case the continuity equation becomes
\begin{equation}
\nabla\lr{\exp\left\{k^{-1}\lr{c-\frac{\abs{\nabla\psi}^2}{2}}\right\}\abs{\nabla\theta}^{-1}}\cdot\nabla\theta\cp\nabla L_{\theta}=0.
\end{equation}
This equation is satisfied provided that there exists a function $u=u\lr{\theta,L_{\theta}}$ such that
\begin{equation}
\abs{\nabla\theta}=\exp\left\{-\frac{\abs{\nabla\psi}^2}{2k}\right\}~u.\label{add}
\end{equation}
It follows that a solution to \eqref{IE7} with given proportionality coefficient $\alpha$ can be obtained by solving system \eqref{BFC} together with the additional condition \eqref{add} with respect to the variables $\lr{\ell,\psi,\theta,u}$.


\section{Examples}
In this section we present a list of Beltrami fields obtained by solving system \eqref{BFC}. Both solenoidal and non-solenoidal Beltrami fields are given.
\begin{enumerate}

\item Let $\lr{r,z,\theta}=\lr{\sqrt{x^2+y^2},z,\arctan\lr{y/x}}$ be a cylindrical coordinate system. The vector field
\begin{equation}
\bol{w}=\cos{z}~\nabla\log{r}+\sin{z}~\nabla\theta,\label{ex1}
\end{equation}
is a Beltrami field with proportionality factor $\hat{h}=-1$.
Moreover $\nabla\cdot\bol{w}=0$.
The flow generated by \eqref{ex1} admits the invariants $z$ and $L_{z}=\theta\,\cos{z}-\log{r}\,\sin{z}$.

\item Consider again the cylindrical coordinate system of the previous example. The vector field
\begin{equation}
\bol{w}=\sin{\theta}~\nabla z+\cos{\theta}~\nabla r,\label{ex2}
\end{equation}
is a Beltrami field with proportionality factor $\hat{h}=1/r$. 
Moreover $\nabla\cdot\bol{w}=\cos{\theta}/r$. The flow generated by \eqref{ex2}
admits the invariants $\theta$ and $L_{\theta}=z\,\cos{\theta}-r\,\sin{\theta}$.

\item Let $\lr{u,v,z}=\lr{\sqrt{r+x},\sqrt{r-x},z}$, $r=\sqrt{x^2+y^2}$, be a parabolic cylindrical coordinate system. This coordinate system is orthogonal with $\abs{\nabla u}=\abs{\nabla v}=1/\sqrt{2r}$. The vector field
\begin{equation}
\bol{w}=\cos{z}~\nabla v+\sin{z}~\nabla u,\label{ex3}
\end{equation}
is a Beltrami field with proportionality factor $\hat{h}=1$. Moreover $\nabla\cdot\bol{w}=0$. The flow generated by \eqref{ex3} admits the invariants $z$ and $L_{z}=u\,\cos{z}-v\,\sin{z}$.

\item Let $\lr{\xi,\eta,\theta}=\lr{\sqrt{\rho+z},\sqrt{\rho-z},\arctan\lr{y/x}}$, $\rho=\sqrt{x^2+y^2+z^2}$, be a parabolic coordinate system. This coordinate system is orthogonal with $\abs{\nabla \xi}=\abs{\nabla \eta}=1/\sqrt{2\rho}$.
The vector field
\begin{equation}
\bol{w}=\cos{\theta}~\nabla \eta+\sin{\theta}~\nabla \xi,\label{ex4}
\end{equation} 
is a Beltrami field with proportionality factor $\hat{h}=1/r$.
Moreover $\nabla\cdot\bol{w}=\lr{x/\eta+y/\xi}/(2r\rho)$. The flow generated by \eqref{ex4} admits the invariants $\theta$ and $L_{\theta}=\xi\,\cos{\theta}-\eta\,\sin{\theta}$.
\item We want to construct a Beltrami field with proportionality factor $\hat{h}=\exp\lr{x+y}$. Solving the eikonal equation $\abs{\nabla\theta}=\exp\lr{x+y}$ gives
\begin{equation}
\theta=\frac{\exp\lr{x+y}}{\sqrt{2}}.
\end{equation} 
Next, we look for coordinates $\ell$ and $\psi$ such that $\nabla\ell\cdot\nabla\lr{x+y}=0$, $\nabla\psi\cdot\nabla\lr{x+y}=0$, $\nabla\ell\cdot\nabla\psi=0$, and $\abs{\nabla\ell}=\abs{\nabla\psi}$.
These conditions can be satisfied by choosing $\ell=z$ and $\psi=\lr{x-y}/\sqrt{2}$. The desired Beltrami field is thus
\begin{equation}
\bol{w}=\frac{1}{\sqrt{2}}\cos\left[\frac{\exp\lr{x+y}}{\sqrt{2}}\right]~\nabla\lr{x-y}+\sin\left[\frac{\exp\lr{x+y}}{\sqrt{2}}\right]~\nabla z.\label{ex5}
\end{equation}
Moreover $\nabla\cdot\bol{w}=0$. The flow generated by \eqref{ex5} admits the invariants $\theta$ and $L_{\theta}=\ell\,\cos{\theta}-\psi\,\sin{\theta}$.
Figure \ref{fig2} shows a plot of \eqref{ex5} over the integral surfaces $\theta=\rm constant$ and $L_{\theta}=\rm constant$.
\begin{figure}[h]
\hspace*{-0cm}\centering
\includegraphics[scale=0.3]{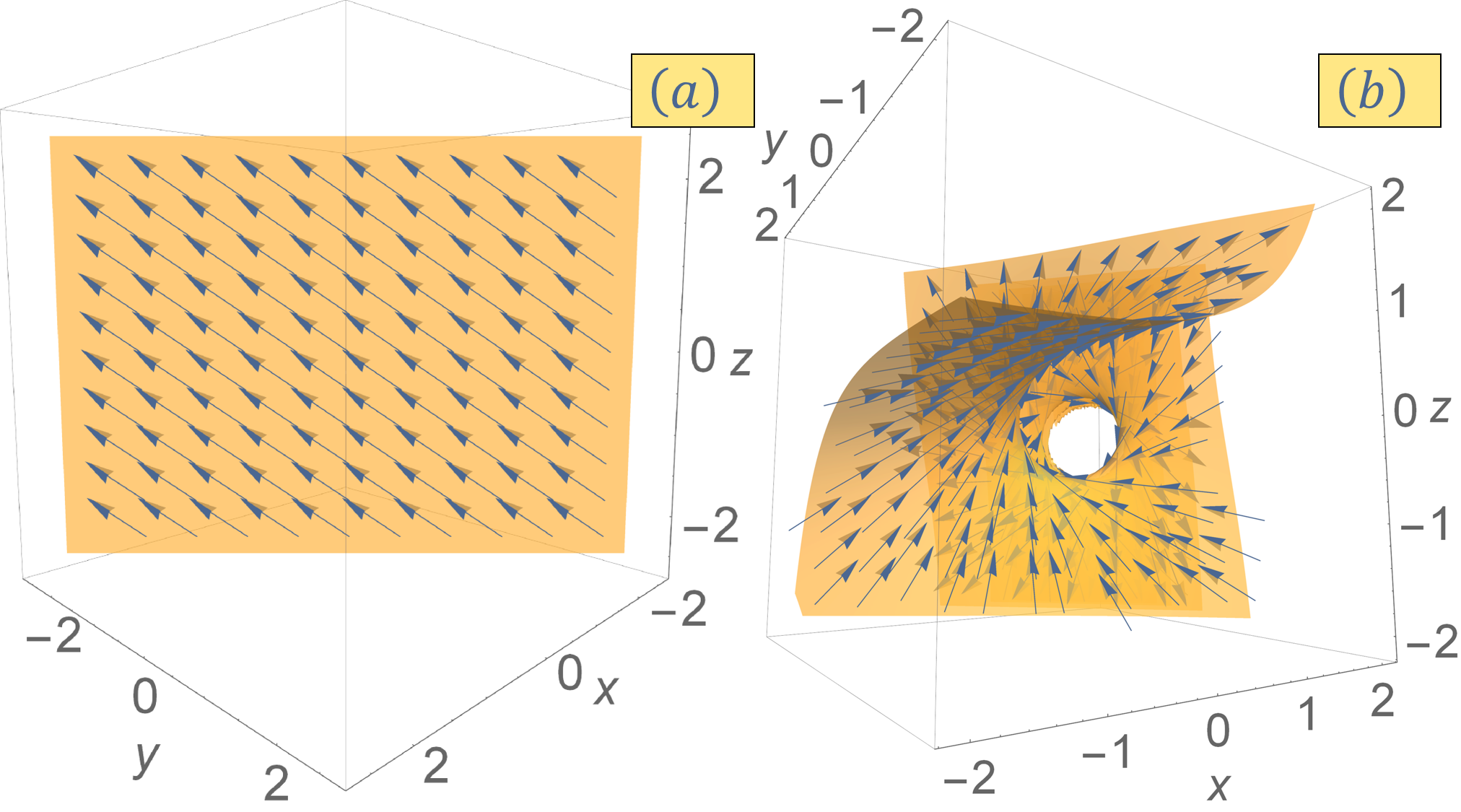}
\caption{\footnotesize (a): Plot of \eqref{ex5} on the surfaces $\theta=0.6$. (b): Plot of \eqref{ex5} on the surface $L_{\theta}=0.5$.}
\label{fig2}
\end{figure}	
 
\item We want to construct a Beltrami field with proportionality factor $\hat{h}=-\cos\lr{x-y}$. Solving the eikonal equation $\abs{\nabla\theta}=\abs{\cos\lr{x-y}}$ gives
\begin{equation}
\theta=\frac{\sin\lr{x-y}}{\sqrt{2}}.
\end{equation} 
Next, we look for coordinates $\ell$ and $\psi$ such that $\nabla\ell\cdot\nabla\lr{x-y}=0$, $\nabla\psi\cdot\nabla\lr{x-y}=0$, $\nabla\ell\cdot\nabla\psi=0$, and $\abs{\nabla\ell}=\abs{\nabla\psi}$.
These conditions can be satisfied by choosing $\ell=\lr{z-x-y}/\sqrt{3}$ and $\psi=\lr{x+y+2z}/\sqrt{6}$. The desired Beltrami field is thus
\begin{equation}
\begin{split}
\bol{w}=\cos\left[{\frac{\sin\lr{x-y}}{\sqrt{2}}}\right]\nabla\lr{\frac{x+y+2z}{\sqrt{6}}}+\\\sin\left[\frac{\sin\lr{x-y}}{\sqrt{2}}\right]\nabla\lr{\frac{z-x-y}{\sqrt{3}}}.\label{ex6}
\end{split}
\end{equation}
Moreover $\nabla\cdot\bol{w}=0$. The flow generated by \eqref{ex6} admits the invariants $\theta$ and $L_{\theta}=\ell\,\cos{\theta}-\psi\,\sin{\theta}$.
Figure \ref{fig3} shows a plot of \eqref{ex6} over the integral surfaces $\theta=\rm constant$ and $L_{\theta}=\rm constant$.
\begin{figure}[h]
\hspace*{-0cm}\centering
\includegraphics[scale=0.3]{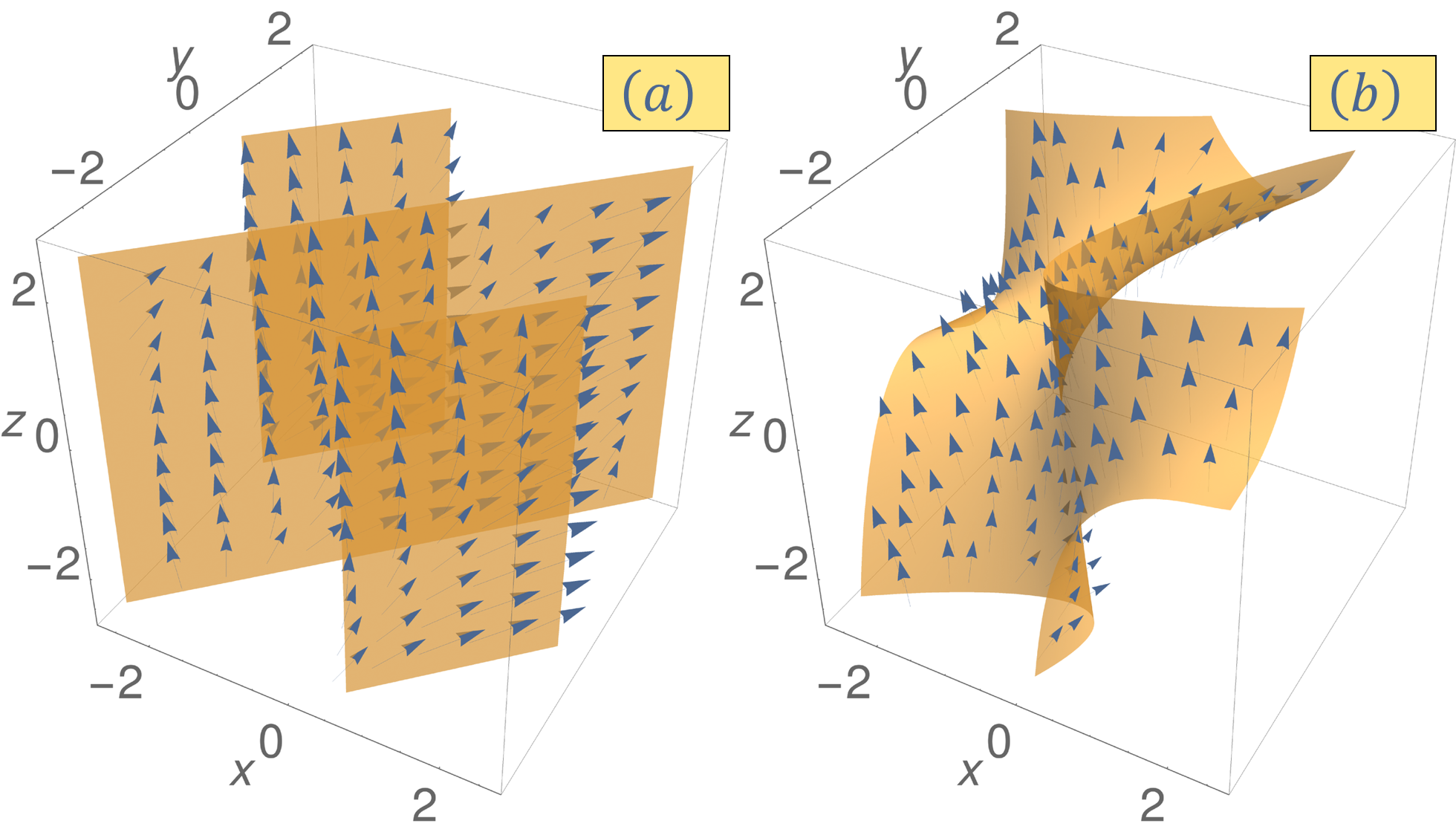}
\caption{\footnotesize (a): Plot of \eqref{ex6} on the surfaces $\theta=0$. (b): Plot of \eqref{ex6} on the surface $L_{\theta}=1$.}
\label{fig3}
\end{figure}
\item We want to construct a Beltrami field with proportionality factor $\hat{h}=\arctan\lr{x+y+z}$. Solving $\abs{\nabla\theta}=\abs{\arctan\lr{x+y+z}}$, we obtain
\begin{equation}
\theta=\left\{\lr{x+y+z}\arctan\lr{x+y+z}-\frac{1}{2}\log\left[{1+\lr{x+y+z}^2}\right]\right\}/\sqrt{3}.
\end{equation} 
The remaining coordinates can be chosen to be $\ell=\lr{x-y}/\sqrt{2}$, and $\psi=\lr{x+y-2z}/\sqrt{6}$. The desired Beltrami field is thus
\begin{equation}
\begin{split}
\bol{w}=\cos\left\{\frac{\lr{x+y+z}\arctan\lr{x+y+z}-\frac{1}{2}\log\left[{1+\lr{x+y+z}^2}\right]}{\sqrt{3}}\right\}\\\nabla\left[{\frac{{x+y-2z}}{\sqrt{6}}}\right]+\\\sin\left\{\frac{\lr{x+y+z}\arctan\lr{x+y+z}-\frac{1}{2}\log\left[{1+\lr{x+y+z}^2}\right]}{\sqrt{3}}\right\}\\\nabla\lr{\frac{x-y}{\sqrt{2}}}.\label{ex7}
\end{split}
\end{equation}
Moreover $\nabla\cdot\bol{w}=0$. The flow generated by \eqref{ex7} admits the invariants $\theta$ and $L_{\theta}=\ell\,\cos{\theta}-\psi\,\sin{\theta}$.
Figure \ref{fig4} shows a plot of \eqref{ex7} over the integral surfaces $\theta=\rm constant$ and $L_{\theta}=\rm constant$.
\begin{figure}[h]
\hspace*{-0cm}\centering
\includegraphics[scale=0.3]{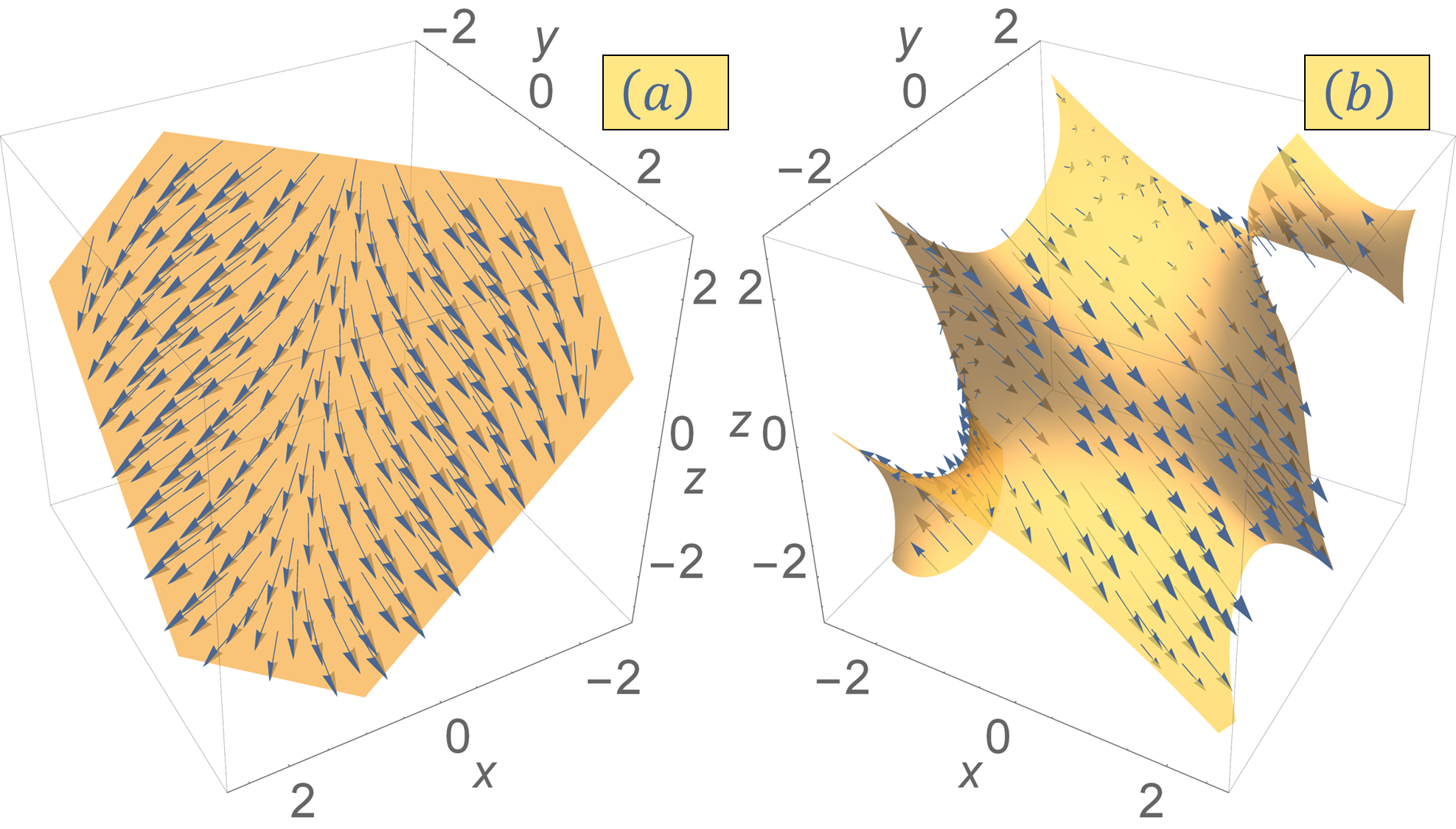}
\caption{\footnotesize (a): Plot of \eqref{ex7} on the surface $\theta=0.2$. (b): Plot of \eqref{ex7} on the surface $L_{\theta}=0.5$.}
\label{fig4}
\end{figure}

\item We look for an orthogonal coordinate system $\lr{\ell,\psi,\theta}$ such that $\ell=\ell\lr{x,y}$, $\psi=\psi\lr{x,y}$, $\theta=z$, and $\abs{\nabla\psi}=\abs{\nabla\ell}$. These conditions are satisfied provided that
\begin{equation}
\ell_{y}=\pm\psi_{x},~~~~\ell_{x}=\mp\psi_{y}.\label{lxly}
\end{equation}
Differentiating each equation with respect to $x$ and $y$, we have
\begin{equation}
\ell_{xx}+\ell_{yy}=\psi_{xx}+\psi_{yy}=0.
\end{equation}
Hence, $\ell$ and $\psi$ must be two-dimensional harmonic functions.
We set $\ell=e^{x}\sin{y}$. Then, integrating \eqref{lxly}, we find $\psi=-e^{x}\cos{y}$.
It follows that the vector field
\begin{equation}
\bol{w}=-\cos{z}~\nabla\lr{e^{x}\cos{y}}+\sin{z}~\nabla\lr{e^{x}\sin{y}},\label{ex8}
\end{equation}
is a Beltrami field with proportionality factor $\hat{h}=1$. Moreover $\nabla\cdot\bol{w}=0$.
The flow generated by \eqref{ex8} admits the invariants $\theta$ and $L_{\theta}=\ell\,\cos{\theta}-\psi\,\sin{\theta}$. Figure \ref{fig5} shows a plot of \eqref{ex8} overt the integral surfaces $\theta={\rm constant}$ and $L_{\theta}={\rm constant}$.

\begin{figure}[h]
\hspace*{-0cm}\centering
\includegraphics[scale=0.3]{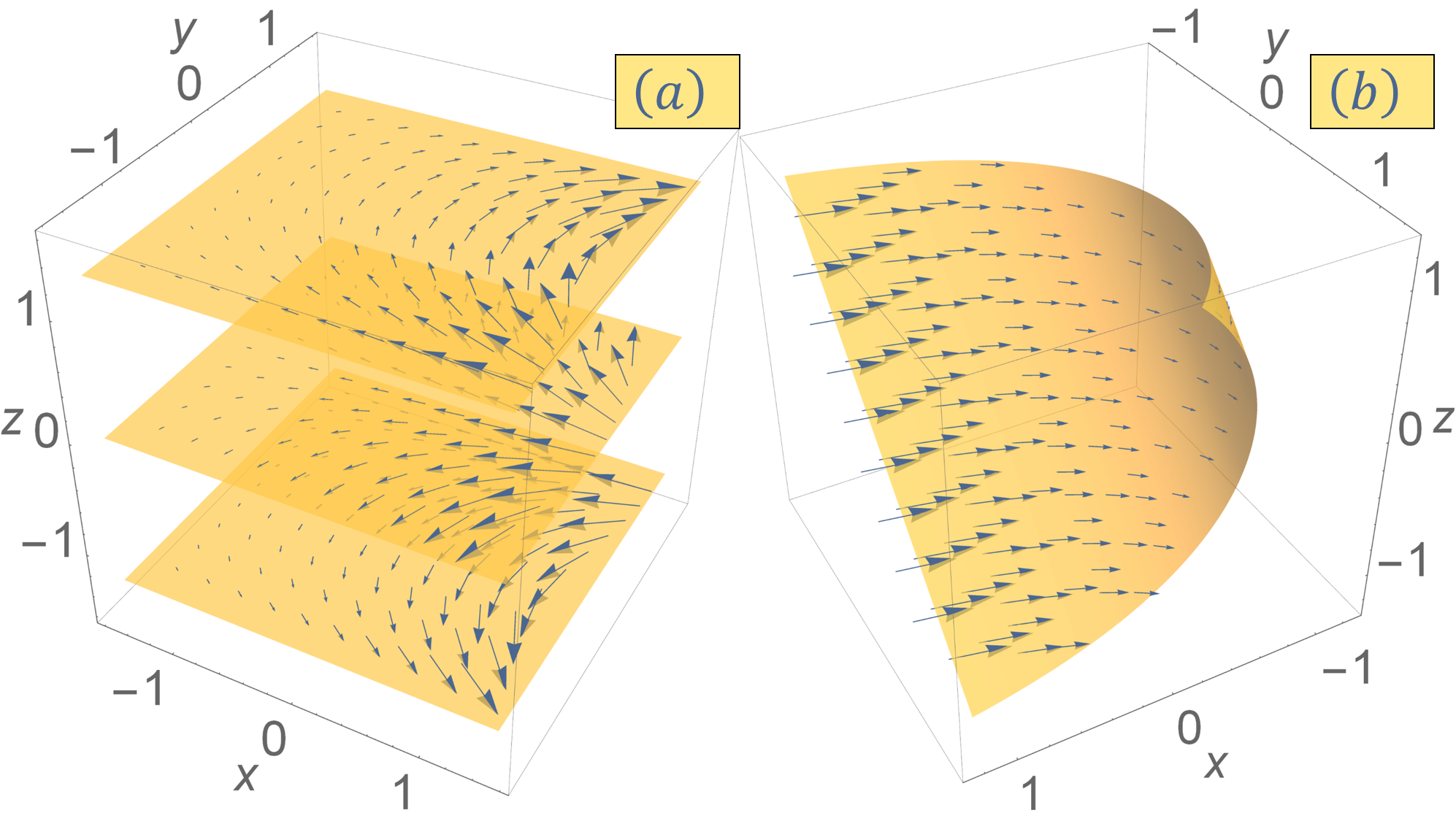}
\caption{\footnotesize (a): Plot of \eqref{ex8} on the surface $\theta=-1.3,0,1.3$. (b): Plot of \eqref{ex8} on the surface $L_{\theta}=0.6$.}
\label{fig5}
\end{figure}

\end{enumerate}

\section{Concluding remarks}

In this study, we formulated a local representation theorem for Beltrami fields
by application of the Lie-Darboux theorem of differential geometry.
We found that, locally, a Beltrami field has a standard form
closely resembling an ABC flow with two of the parameters set to zero. 
In addition, the flow generated by a Beltrami field
is endowed with two local integral invariants that physically
represent the plane of the flow and conservation of an angular momentum-like quantity
in the direction across to the plane.
The obtained local representation naturally leads to
a method to construct Beltrami fields with given proportionality factor:
the solution of the Beltrami field equation \eqref{Beq} can be reformulated into the problem of deriving an 
orthogonal coordinate system satisfying certain geometric conditions.
First we have to solve the eikonal equation, where
the length of one of the cotangent vectors must equal the
proportionality factor. The desired Beltrami field can then
be obtained if the cotangent vector can be 
completed to an orthogonal coordinate system such that the length 
of the two remaining cotangent vectors are equal. 
Using the derived method, we constructed several Beltrami fields
with constant and non-constant proportionality factors, and with
zero and finite divergence. 

\section{Acknowledgments}

\noindent The research of N. S. was supported by JSPS KAKENHI Grant No. 18J01729, 
and that of M. Y. by JSPS KAKENHI Grant No. 17H02860.


\end{document}